\newtheorem{prop}{Proposition}
\newtheorem{thm}{Theorem}
\newtheorem{remrk}{Remark}
\newtheorem{lemma}{Lemma}
\newcommand{\mythanks}{\thanks{The author is with the Control \& Dynamical Systems department, Division of Engineering and Applied Sciences, California Institute of Technology. Address: MC 107-81, 1200 E. California Blvd., 91125, Pasadena, CA.
        E-mail: andrea@cds.caltech.edu}}
\newcommand{\rdist}{d}
\newcommand{\manifold}{\mathscr{M}}
\newcommand{\metric}{{ g}}
\newcommand{\Sym}{\mathscr{S}}
\newcommand{\SymP}{\mathscr{P}}
\newcommand{\mlog}{\mathrm{Log}}
\newcommand{\mexp}{\mathrm{Exp}}
\newcommand{\opt}{\mathrm{opt}}
\newcommand{\pess}{\mathrm{pes}}
\newcommand{\astring}{\omega}
\newcommand{\stringset}{\{0,1\}^{\mathbb{N}}}
\newcommand{\ifs}{\mathcal{S}}
\newcommand{\ifsopt}{\ifs_{\opt}}
\newcommand{\ifspess}{\ifs_{\pess}}
\newcommand{\gaussdist}{\mathcal{G}}
\newcommand{\Gauss}{\mathscr{G}}
\newcommand{\GaussO}{\mathscr{G}_0}
\newcommand{\fg}{g}
\newcommand{\fh}{h}
\newcommand{\fgopt}{{g}_{\opt}}
\newcommand{\fhopt}{{h}_{\opt}}
\newcommand{\fgpess}{{g}_{\pess}}
\newcommand{\fhpess}{{h}_{\pess}}
\newcommand{\vecsym}[1]{\boldsymbol{#1}}
\newcommand{\ve}{\vecsym{e}}
\newcommand{\vx}{\vecsym{x}}
\newcommand{\vy}{\vecsym{y}}
\newcommand{\vz}{\vecsym{z}}
\newcommand{\vtheta}{\vecsym{\theta}}
\newcommand{\vzmean}{\vecsym{\mu}}
\newcommand{\vzcov}{\vecsym{\Sigma}}
\newcommand{\veta}{\vecsym{\eta}}
\newcommand{\vzero}{\vecsym{0}}
\newcommand{\se}{e}
\newcommand{\ex}{\mathbb{E}}
\newcommand{\exi}{\mathbb{M}}
\newcommand{\pr}{\mathbb{P}}
\newcommand{\m}[1]{\boldsymbol{\mathrm{#1}}}
\newcommand{\mA}{\m{A}}
\newcommand{\mB}{\m{B}}
\newcommand{\mC}{\m{C}}
\newcommand{\mI}{\m{I}}
\newcommand{\mM}{\m{M}}
\newcommand{\mQ}{\m{Q}}
\newcommand{\mX}{\m{X}}
\newcommand{\mY}{\m{Y}}
\newcommand{\vnoiseinput}{\boldsymbol{\omega}}
\newcommand{\vnoise}{\boldsymbol{\epsilon}}
\newcommand{\fim}{\boldsymbol{\mathcal{I}}}
\newcommand{\mP}{\boldsymbol{\mathrm{P}}}
\newcommand{\mPinf}{\boldsymbol{\mathrm{P}}_{\!\infty}}
\newcommand{\mPpess}{\boldsymbol{\mathrm{P}}_{\!\pess}}
\newcommand{\mWpess}{\m{W}_{\!\pess}}
\newcommand{\sWopt}{\mathrm{W}_{\!\opt}}
\newcommand{\sWpess}{\mathrm{W}_{\!\pess}}
\newcommand{\sP}{\mathrm{P}}
\newcommand{\sPinf}{\mathrm{P}_{\!\infty}}
\newcommand{\sPopt}{\mathrm{P}_{\!\opt}}
\newcommand{\sPpess}{\mathrm{P}_{\!\pess}}
\newcommand{\sQ}{\mathrm{Q}}
\newcommand{\sfim}{\mathcal{I}}
\newcommand{\arr}{\gamma}
\newcommand{\probArr}{\overline{\gamma}}
\newcommand{\probArrCrit}{\probArr_c}
\newcommand{\probArrCritO}{{\probArr'}_c}
\begin{document}

\title{Geometric remarks on\\
Kalman filtering with intermittent observations }

\author{Andrea Censi \mythanks}
\maketitle
\begin{abstract}
Sinopoli \emph{et al.} (TAC, 2004) considered the problem of optimal
estimation for linear systems with Gaussian noise and intermittent
observations, available according to a Bernoulli arrival process.
They showed that there is a {}``critical'' arrival probability of
the observations, such that under that threshold the expected value
of the covariance matrix (i.e., the quadratic error) of the estimate
is unbounded. Sinopoli \emph{et al.}, and successive authors, interpreted
this result implying that the behavior of the system is qualitatively
different above and below the threshold. This paper shows that this
is not necessarily the only interpretation. In fact, the critical
probability is different if one considers the average error instead
of the average quadratic error. More generally, finding a meaningful
{}``average'' covariance is not as simple as taking the algebraic
expected value. A rigorous way to frame the problem is in a differential
geometric framework, by recognizing that the set of covariance matrices
(or better, the manifold of Gaussian distributions) is not a flat
space, and then studying the intrinsic Riemannian mean. Several metrics
on this manifold are considered that lead to different critical probabilities,
or no critical probability at all.
\end{abstract}

\section{Introduction}


The Kalman filter was conceived in the 1960s~\cite{kalman1960} and
found immediate use at the forefront of engineering~\cite{McGee:1985ne}%
{}. For the successive decades, the state-space approach of the Kalman
filter was the tool of choice for many filtering and tracking problems,
both in its algebraically equivalent formulations (e.g., Information
filter~\cite{maybeck79}, square root and {}``array'' algorithms~\cite{kailath00linear})
and its extensions to nonlinear problems (e.g., extended Kalman filter
(EKF), unscented Kalman filter), only recently giving way to Monte
Carlo methods (particle filters). %
{}

In recent years, in many engineering fields, estimation problems have
been considered where the availability or observations, or their structure,
is subject to random phenomena, and one is interested in characterizing
the {}``average accuracy''. For example, in robotics, the EKF is
used in problems such as Simultaneous Localization and Mapping (SLAM);
the observations structure depends on the landmark configuration,
which is unknown a priori, yet it is of interest to study average
accuracy results~\cite{mourikis08characterization}.

In the control literature, random observations can model packet drops,
which is one of the important phenomena in network-based estimation
and control. Sinopoli \emph{et al.~}\cite{sinopoli04kalman}\emph{
}considered the problem of Kalman filtering when the observations
are available intermittently with Bernoulli probability. They showed
that there exists a critical value of the arrival probability such
that, under that threshold, the expected value of the error covariance
matrix is unbounded. Other successive papers improved on the same
results by better characterizing the critical probability or considering
non-independent packet drops~\cite{Huang2007Stability,mo08characterization,plarre09kalman,kar09kalman}%
{}. 

The way the result of Sinopoli \emph{et al. }is often interpreted
is that the system has a qualitatively different behavior\emph{ }above
and below the critical probability. The purpose of this note is to
show that this is not necessarily the only interpretation. A motivating
example is given in Section~\ref{sec:Motivating-example}. Considering
the expected value of the covariance is equivalent to considering
the expected value of the squared error norm~$\ex\{\left\Vert \ve\right\Vert _{2}^{2}\}$.
If one instead considers the error norm $\ex\{\left\Vert \ve\right\Vert _{2}\}$,
which is equivalent to considering the expected value of the standard
deviation, a different --- and lower --- critical probability is obtained.
This raises doubts about the significance of Sinopoli \emph{et al.}'s
critical probability. More generally, what is critical is the way
one defines the {}``average'' uncertainty. Because the operation
of expected value is not invariant to change of coordinates, the result
is different if one averages the covariances, the standard deviations,
or the information matrices: in general, $\ex\{\mP\}\neq\ex\{\sqrt{\mP}\}^{2}\neq\ex\{\mP^{-1}\}^{-1}$.
This paper advocates a geometric point of view. The basic assumption
is that covariance matrices are only a particular choice of coordinates
to represent Gaussian distributions, which is a Riemannian manifold
with a very rich structure. Section~\ref{sec:Means-on-manifolds}
deals with how to extend the idea of {}``mean'' to Riemannian manifolds,
and how that depends on the choice of a metric. Section~\ref{sec:Different-metrics-for}
discusses several metrics one can use for the manifold of Gaussian
distributions. After the obvious metrics are discussed (which lead
to averaging covariances, information matrices, etc.), a non-trivial
Riemannian metric is introduced that is shown to be the most most
natural when dealing with Gaussian distributions, or, in general,
when considering the intrinsic properties of the set of positive definite
matrices. These different metrics lead to different critical probabilities,
or no critical probability at all.


\paragraph*{\textmd{Notation}}

All matrices are assumed to be real. Let $\mA^{*}$ be the transpose
of the matrix~$\mA$, let $\mbox{Tr}(\mA)$ be its trace, and $\{\lambda_{i}(\mA)\}$
its eigenvalues. Let $\mathrm{GL}(n)$ be the set of $n\times n$
invertible matrices; let $\mathrm{O}(n)$ be the set of orthogonal
matrices; let $\Sym(n)$ be the set of symmetric $n\times n$ matrices;
and let $\SymP(n)\subset\Sym(n)$ be the set of positive definite
matrices. Let $\Gauss(n)$ be the manifold of Gaussian distributions
on $\mathbb{R}^{n}$, and $\GaussO(n)\subset\Gauss(n)$ the submanifold
of Gaussian distributions with mean~$\vzero$. An element of $\Gauss(n)$
is denoted as $\gaussdist(\boldsymbol{\mu},\mP)$, where the mean
$\boldsymbol{\mu}\in\mathbb{R}^{n}$ and the covariance $\mP\in\SymP(n)$
serve as coordinates on $\Gauss(n)$. Let~$\left\Vert \cdot\right\Vert $
be the operator norm ($\left\Vert \mA\right\Vert ^{2}=\lambda_{\max}(\mA\mA^{*})$),
and let~$\left\Vert \cdot\right\Vert _{F}$ be the Frobenius norm
($\left\Vert \mA\right\Vert _{F}^{2}=\mbox{Trace}(\mA\mA^{*})$).
For $\mP\in\SymP(n)$, let~$\sqrt{\mP}$ be the unique matrix in~$\SymP(n)$
such that $(\sqrt{\mP})^{2}=\mP$. %
{} All inequalities between matrices are to be interpreted in the L\"{o}wner
partial order: $\mP_{1}\geq\mP_{2}$ iff $\mP_{1}-\mP_{2}$ is semidefinite
positive.

\section{Motivating example\label{sec:Motivating-example}}

Consider the discrete-time linear dynamical system\begin{eqnarray*}
\vx(k+1) & = & \mA\,\vx(k)+\mB\,\vnoiseinput(k),\\
\vy(k) & = & \mC\,\vx(k)+\vnoise(k),\end{eqnarray*}
with $\vx\in\mathbb{R}^{n}$, $\vnoiseinput\in\mathbb{R}^{p}$, $\vy\in\mathbb{R}^{q}$,
$ $and~$\mA$,~$\mB$,~$\mC$ real matrices of appropriate sizes.
Assume $\vnoiseinput(k)$ and $\vnoise(k)$ are white Gaussian sequences
with zero mean and covariance matrix equal to the identity, and that
the initial prior for~$\vx(0)$ is Gaussian with mean~$\hat{\vx}(0)$
and covariance $\mP(0)$. Moreover, assume that the observations are
available randomly, i.e., one has available the observations $\vy'(k)=\arr(k)\vy(k)$,
where~$\arr(k)$ is a sequence of independent Bernoulli random variables,
such that $\pr(\{\arr(k)=1\})=\probArr$ and $\pr(\{\arr(k)=0\})=1-\probArr$.
The conditional estimate of $\vx(k)$, given the available observations
until time~$k$ is still Gaussian~\cite{sinopoli04kalman}, and
is indicated by the mean~$\hat{\vx}(k)$ and the covariance~$\mP(k)$.
Define the error estimate $\ve(k)\triangleq\hat{\vx}(k)-\vx(k)$.
Then~$\ve(k)$ has a Gaussian distribution with mean~$\vzero$ and
covariance~$\mP(k)$. This is the setup considered in~\cite{sinopoli04kalman}
and is henceforth called Linear/Gaussian/Bernoulli (LGB); the name
{}``Kalman filtering'' is not used because the results are independent
of the particular representation of the optimal filter.%
{} 

Let $\mQ\triangleq\mB\mB^{*}$ and $\fim\triangleq\mC^{*}\mC$. If
the observations are always available ($\probArr=1$), the evolution
of $\mP(k)$ is deterministic and obeys the recursion%
\footnote{Note that this paper uses the posterior covariance matrix ($\mP(k)=\mP_{k|k}=\mbox{cov}(\hat{\vx}(k)-\vx(k)|\vy'(1),\dots,\vy'(k)$).
The a-priori covariance $\mP_{k|k-1}$ and a-posteriori $\mP_{k|k}$
are linked by the simple relation $\mP_{k|k-1}=\mA\mP_{k-1|k-1}\mA^{*}+\mQ$
hence there is no loss of generality for investigating the boundedness
of the stationary distribution: $\ex\{\mP_{k|k-1}\}$ is bounded if
and only if $\ex\{\mP_{k|k}\}$ is. Using the posterior covariance
matrix seems a better choice for LGB filtering because, when written
with information matrices ($\mY\mapsto(\mA\mY^{-1}\mA^{*}+\mQ)^{-1}+\fim$)
the difference between the maps $\fg$ and $\fh$ is the constant
term~$\fim$~\cite{censi09fractals}.%
}\begin{equation}
\fg:\mP\mapsto\left((\mA\mP\mA^{*}+\mQ)^{-1}+\fim\right)^{-1}.\label{eq:gsimple}\end{equation}
If $(\mA,\mB)$ is stabilizable and $(\mA,\mC)$ is detectable, then~$g$
has a fixed point $\mPinf$ to which~$\mP(k)$ tends regardless of
the initial value~$\mP(0)$~\cite{maybeck79}. The Kalman filter
and analogous variants implement the recursion with different representations
for~$\mP$, and faster and more numerically stable algorithms than~\eqref{eq:gsimple},
which is used in the present analysis for convenience and compactness
(apply one of the matrix inversion lemmas to obtain the usual Riccati
recursion). 

If $\probArr\in(0,1)$, the evolution of $\mP(k)$ is not deterministic
anymore. A convenient way to represent the evolution of~$\mP(k)$
is in the form of an Iterated Function System~\cite{edgar98integral}\begin{equation}
\ifs=\begin{cases}
g: & \mP\mapsto\left((\mA\mP\mA^{*}+\mQ)^{-1}+\fim\right)^{-1},\qquad p_{g}=\probArr,\\
h: & \mP\mapsto\mA\mP\mA^{*}+\mQ,\qquad\,\,\,\quad\qquad\qquad p_{h}=1-\probArr.\end{cases}\label{eq:ifs}\end{equation}
A stationary distribution for $\mP$ exists for all values of~$\probArr$
(under much more general conditions than Bernoulli observations)~\cite{bougerol93kalman}.%
{} In the following, the stationary distribution is referred to as the
Linear/Gaussian/Bernoulli (LGB) distribution, and~$\mP$ refers to
the stationary variable.

Sinopoli \emph{et al.}~\cite{sinopoli04kalman} showed that there
is a threshold~$\probArrCrit$ such that, for $\probArr<\probArrCrit$,
$\ex\{\mP\}$ is unbounded. This is equivalent to say that the square
of the Euclidean norm of the estimation error, $\ex\{\left\Vert \ve\right\Vert _{2}^{2}\}$,
is unbounded. The threshold~$\probArrCrit$ depends non trivially
on the parameters of the system, and a precise characterization is
object of current research~\cite{sinopoli04kalman,Huang2007Stability,mo08characterization,plarre09kalman}%
{}. 

The way this result is often interpreted is that for~$\probArr\geq\probArrCrit$
the behavior of the system\emph{ }is qualitatively different than
for $\probArr<\probArrCrit$, and therefore~$\probArrCrit$ is called
{}``critical'' probability. However, if one considers another measure
of performance, the critical probability changes, as described by
the following result.

%
{}
\begin{prop}
\label{pro:1}Consider the problem of LGB filtering in the scalar
case, with $\mA=a>1$, $\fim=\sfim>0$, $\mQ=\sQ\geq0$. Then the
expected squared error $\ex\{\se^{2}\}$ and the expected error $\ex\{|\se|\}$
have two different critical probabilities:
\begin{enumerate}
\item $\ex\{\se^{2}\}$ is bounded if and only if $\probArr>\probArrCrit=1\!-1/a^{2}$.
\item $\ex\{|\se|\}$ is bounded if and only if $\probArr>\probArrCritO=1-1/|a|$.
\end{enumerate}
\end{prop}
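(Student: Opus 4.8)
The plan is to reduce both claims to fractional moments of the stationary covariance $\sP$ and to exploit the distributional fixed-point (self-similar) structure of the IFS~\eqref{eq:ifs}. First I would record that, conditionally on $\sP$, the scalar error $\se$ is $\gaussdist(0,\sP)$, so that $\ex\{\se^{2}\}=\ex\{\sP\}$ and $\ex\{|\se|\}=\sqrt{2/\pi}\,\ex\{\sqrt{\sP}\}$. Hence both parts are instances of a single statement: for $s\in(0,1]$ the moment $\ex\{\sP^{s}\}$ is finite if and only if $\probArr>1-a^{-2s}$, which specializes to $s=1$ for part~(1) and $s=1/2$ for part~(2) (recall $a>1$, so $|a|=a$). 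In the scalar case the two branches of~\eqref{eq:ifs} read $\fg(\sP)=\big((a^{2}\sP+\sQ)^{-1}+\sfim\big)^{-1}$ and $\fh(\sP)=a^{2}\sP+\sQ$, and the only structural facts I need are that $\fg$ is uniformly bounded, $0<\fg(\sP)<\sfim^{-1}$, whereas $\fh$ is the affine amplification by the factor $a^{2}>1$.

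For the sufficiency direction ($\probArr>1-a^{-2s}\Rightarrow$ bounded) I would run a moment recursion along the dynamics. Writing $m_{k}=\ex\{\sP(k)^{s}\}$ and conditioning on the Bernoulli arrival, the concavity of $t\mapsto t^{s}$ for $s\in(0,1]$ gives the subadditive bound $\fh(\sP)^{s}\le a^{2s}\sP^{s}+\sQ^{s}$, so that
\begin{equation}
m_{k+1}\le \probArr\,\sfim^{-s}+(1-\probArr)\big(a^{2s}m_{k}+\sQ^{s}\big).\label{eq:momrec}
\end{equation}
When $(1-\probArr)a^{2s}<1$ the affine map in~\eqref{eq:momrec} is a contraction with a finite fixed point, so the $m_{k}$ stay uniformly bounded; since $\sP(k)$ converges in distribution to the stationary law, Fatou's lemma yields $\ex\{\sP^{s}\}\le\liminf_{k}m_{k}<\infty$. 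The threshold condition $(1-\probArr)a^{2s}<1$ is exactly $\probArr>1-a^{-2s}$.

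For the necessity direction I would use the stationary distributional identity $\sP\stackrel{d}{=}\fg(\sP)$ with probability $\probArr$ and $\sP\stackrel{d}{=}\fh(\sP)$ with probability $1-\probArr$ (existence of the stationary law being guaranteed by~\cite{bougerol93kalman}). Taking $s$-th moments and discarding nonnegative terms gives the reverse bound $\ex\{\sP^{s}\}\ge \probArr\,\ex\{\fg(\sP)^{s}\}+(1-\probArr)a^{2s}\ex\{\sP^{s}\}$, where $\ex\{\fg(\sP)^{s}\}>0$ because $\fg>0$. Assuming $\ex\{\sP^{s}\}<\infty$ for contradiction and rearranging yields $\ex\{\sP^{s}\}\big(1-(1-\probArr)a^{2s}\big)\ge \probArr\,\ex\{\fg(\sP)^{s}\}>0$, which is impossible once $(1-\probArr)a^{2s}\ge1$, i.e. once $\probArr\le1-a^{-2s}$. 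Thus the moment is infinite on the whole closed region $\probArr\le1-a^{-2s}$, matching the strict threshold of the sufficiency part.

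The main obstacle is one of rigor rather than of idea: the expectations in the stationary identity may be infinite, so the algebraic rearrangement in the necessity step must be justified by a truncation (apply the inequality to $\sP\wedge N$ and let $N\to\infty$) or by the iterate-and-Fatou argument used above; one must also confirm that $\ex\{\fg(\sP)^{s}\}$ is genuinely positive, which follows from $\sfim>0$ forcing $\fg(\sP)>0$ for every realization. Combining the two directions gives $\ex\{\sP^{s}\}<\infty\iff\probArr>1-a^{-2s}$; setting $s=1$ gives $\probArrCrit=1-1/a^{2}$ and $s=1/2$ gives $\probArrCritO=1-1/|a|$, as claimed.
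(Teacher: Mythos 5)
Your proof is correct, but it takes a genuinely different route from the paper's. The paper sandwiches the true iteration between an ``optimist'' and a ``pessimist'' IFS in which the observation branch~$\fg$ is replaced by a constant reset value ($\sWopt$ or $\sWpess=\sfim^{-1}$), computes the stationary law of each approximation explicitly as a geometric mixture over the holding time~$\tau$ (steps since the last arrival), and reads off the thresholds $a^{2}(1-\probArr)<1$ and $|a|(1-\probArr)<1$ from the explicit series, noting the thresholds do not depend on the reset value. You instead unify both parts as the single statement $\ex\{\sP^{s}\}<\infty$ iff $\probArr>1-a^{-2s}$ for $s\in(0,1]$, proving sufficiency by the moment recursion $m_{k+1}\le\probArr\,\sfim^{-s}+(1-\probArr)\left(a^{2s}m_{k}+\sQ^{s}\right)$ --- subadditivity of $t\mapsto t^{s}$ plus the uniform bound $\fg\le\sfim^{-1}$, which is the paper's pessimist reset in disguise --- followed by Fatou, and necessity by the stationary fixed-point inequality $\ex\{\sP^{s}\}\ge\probArr\,\ex\{\fg(\sP)^{s}\}+(1-\probArr)a^{2s}\ex\{\sP^{s}\}$, legitimately rearranged under the finiteness hypothesis being contradicted (all terms are then finite, and $\ex\{\fg(\sP)^{s}\}>0$ since $\sfim>0$). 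Your route buys a one-shot treatment of all fractional moments and sidesteps something the paper glosses over (existence of stationary laws for the auxiliary systems is relegated to a deleted comment); the paper's route buys closed-form expressions for the bounding expectations and sets up the template reused in the proof of Proposition~\ref{pro:rdist-bound-1}. Two small points of care on your side: the Fatou step requires $\sP(k)\to\sP$ in distribution (attractivity, not mere existence, of the stationary law --- available from the cited references, or avoidable by applying your truncation $\sP\wedge N$ to the stationary identity in the sufficiency direction as well), and your necessity argument does correctly cover the boundary case $\probArr=1-a^{-2s}$, matching the strict inequality in the statement.
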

\begin{proof}
It is convenient to define two others systems, which are respectively
an {}``optimist'' and a {}``pessimist'' approximations to the
iteration defined by~\eqref{eq:ifs}. %
{}The stationary distribution has support in the set $\{\sP|\sP\geq\sPinf\}$.
The optimist approximation simplifies the map~$\fg$ to a constant
by considering the best scenario $\sP=\sPinf$:\begin{equation}
\ifsopt=\begin{cases}
\fgopt: & \sPopt\mapsto\sWopt,\qquad\qquad\quad\sWopt\triangleq\left((a^{2}\sPinf+\sQ)^{-1}+\sfim\right)^{-1},\\
\fhopt: & \sPopt\mapsto a^{2}\sPopt+\sQ.\end{cases}\label{eq:ifsopt}\end{equation}
The pessimist approximation considers the worst case ($\sP\rightarrow\infty$):\begin{equation}
\ifspess=\begin{cases}
\fgpess: & \sPpess\mapsto\sWpess,\qquad\qquad\quad\sWpess\triangleq\sfim^{-1},\\
\fhpess: & \sPpess\mapsto a^{2}\sPpess+\sQ.\end{cases}\label{eq:ifspess}\end{equation}
The two systems are identical except for the {}``reset'' values
$\sWopt$ and $\sWpess$ after an observation is received. It is straightforward
to check that, if $\sPopt(0)=\sP(0)=\sPpess(0)$ and the three systems
see the same sequences of observations, $\sPopt(k)\leq\sP(k)\leq\sPpess(k)$.
Therefore, for the stationary variables, %
{}$\ex\{\sPopt\}\leq\ex\{\sP\}\leq\ex\{\sPpess\}$ and $\ex\{\sqrt{\sPopt}\}\leq\ex\{\sqrt{\sP}\}\leq\ex\{\sqrt{\sPpess}\}$. 

Obviously $\ex_{\vnoiseinput,\vnoise}\{\se^{2}(k)\}=\sP(k)$, by the
definition of covariance and the fact that $\ex_{\vnoiseinput,\vnoise}\{\se(k)\}=0$.
Moreover, in the case of a Gaussian distribution, one can show that
$\ex_{\vnoiseinput,\vnoise}\{|\se(k)|\}=\sqrt{2/\pi}\sqrt{\sP(k)}$.
Therefore, upper and lower bounds for $\ex\{\se^{2}\}$ and $\ex\{|\se|\}$
can be found as\begin{equation}
\ex\{\sPopt\}\leq\ex\{\se^{2}\}\leq\ex\{\sPpess\},\label{eq:bound_cov}\end{equation}
\begin{equation}
\sqrt{2/\pi}\ex\{\sqrt{\sPopt}\}\leq\ex\{|\se|\}\leq\sqrt{2/\pi}\ex\{\sqrt{\sPpess}\}.\label{eq:bound_sqrt}\end{equation}

The rest of the proof estimates the terms in these expressions and
is inspired by some ideas in~\cite{epstein08probabilistic}. The
pdf for the stationary distribution for the two IFSs~\eqref{eq:ifsopt}--\eqref{eq:ifspess}
can be computed in closed form. Consider, for example, the IFS in~\eqref{eq:ifsopt}.
The value of~$\sPopt$ at time~$k$ can be written in closed form
as a function of~$\tau(k)$, the number of steps that passed without
receiving an observation ($\tau(k)=0$ if the last observation was
received):%
{}\[
\sPopt(k)=(a^{2})^{\tau(k)}\sWopt+\sum_{i=0}^{\tau(k)-1}(a^{2})^{i}\sQ.\]
Assuming independent arrivals, $\tau(k)$ has the probability distribution
$\pr\left(\left\{ \tau(k)=j\right\} \right)=(1-\probArr)^{j}\probArr$.
The expected value $\ex\{\sPopt\}$ can be computed as $\sum_{j=0}^{\infty}\pr\left(\left\{ \tau(k)=j\right\} \right)\sPopt(\tau(k))$,
giving\[
\ex\{\sPopt\}=\probArr\left(\sWopt+\frac{\sQ}{a^{2}-1}\right)\sum_{j=0}^{\infty}\left[a^{2}(1-\probArr)\right]^{j}-\frac{\sQ}{a^{2}-1}.\]
The series converges, and $\ex\{\sPopt\}$ is bounded, if and only
if $a^{2}(1-\probArr)<1$ (as already proved in~\cite{sinopoli04kalman}).
Analogously, the expected value $\ex\{\sqrt{\sPopt}\}$ can be computed
as\[
\ex\{\sqrt{\sPopt}\}=\probArr\sum_{j=0}^{\infty}\left((1-\probArr)|a|\right)^{j}\sqrt{\left(\sWopt+\frac{\sQ}{a^{2}-1}\right)-\frac{\sQ}{(a^{2})^{j}(a^{2}-1)}}.\]
 The series converges, and $\ex\{\sqrt{\sPopt}\}$ is bounded, if
and only if $|a|(1-\probArr)<1$. 

Because the proof did not rely on the value of $\sWopt$, the same
convergence critical values are valid for the pessimist approximations
$\ex\{\sPpess\}$ and $\ex\{\sqrt{\sPpess}\}$ as well. By taking
into account~\eqref{eq:bound_cov} and \eqref{eq:bound_sqrt}, we
see that $a^{2}(1-\probArr)<1$ is a necessary and sufficient condition
for boundedness of $\ex\{\se^{2}\}$, and likewise $|a|(1-\probArr)<1$
for boundedness of $\ex\{|\se|\}$.
\end{proof}
Because $\probArrCrit>\probArrCritO$, there is a range of values
$(\probArrCritO,\probArrCrit]$ such that $\ex\{|\se|\}$ is bounded,
but $\ex\{\se^{2}\}$ is not. The value $\probArrCritO$ is as least
as {}``critical'' as $\probArrCrit$. The goal of this paper is
not to advocate the use of the boundedness of $\ex\{\left\Vert \ve\right\Vert \}$
rather than $\ex\{\left\Vert \ve\right\Vert ^{2}\}$ as a criterion
of stability; rather, it is of more interest to discuss what are the
assumptions behind using one or the other. Is one {}``intrinsically''
more correct? Other that in Kalman filtering with intermittent observations,
similar questions arise in other problems where one must compute an
{}``average'' accuracy~\cite{mourikis08characterization}. In general,
the expected value is not invariant to change of coordinates, so a
different average accuracy is obtained if one considers the average
of covariances, of standard deviations, or of information matrices. 

Some answers to these questions can be found by setting the problem
in a geometric framework. In particular, instead of considering the
set $\SymP(n)$ of covariance matrices as a subset of $ $$\mathbb{R}^{n\times n}$,
one can consider, more abstractly, the manifold~$\Gauss(n)$ of Gaussian
distributions. The next section shows how the operation of expected
value $\ex\{\cdot\}$ can be generalized to Riemannian manifolds,
such that one can define a {}``Riemannian mean'' $\exi\{\cdot\}$
independently of the choice of coordinates. %
{}%
{}

\section{Means on Riemannian manifolds\label{sec:Means-on-manifolds}}

Classical mathematical statistics~\cite{CasellaBerger} developed
in the first decades of last century in the context of Euclidean spaces.
Subsequently, it became clear that many applications would benefit
from rigorous coordinate-free approaches to statistics on manifolds.
Examples of such applications and corresponding manifolds include
robotics~\cite{wan06secondorder} (motion groups), shape analysis~\cite{kendall93}
(size-and-shape spaces), radar imaging~\cite{smith05intrinsic} (Grassman
manifold), diffusion tensor magnetic resonance imaging~\cite{lenglet05riemannian,lenglet06statistics}
%
{}(positive definite tensors), and Lie groups in general~\cite{grenander65algebraic}.
This section recalls the definition of Riemannian mean on manifolds;
the reader is assumed to be familiar with basic differential geometry
(e.g.,~\cite{docarmo94}).%
{}

Let~$X$ be a random variable taking values in~$\mathbb{R}^{n}$
with joint cumulative distribution function~$\mu$. The expected
value of~$X$ (or Euclidean mean, or simply \emph{mean}) is defined,
in the most general terms, as the Lebesgue-Stieltjes integral\begin{equation}
\ex\{X\}\triangleq\int_{\mathbb{R}^{n}}\boldsymbol{x}\, d\mu(\boldsymbol{x}).\label{eq:mean1}\end{equation}
This definition is not directly generalizable to manifolds because
it assumes that the set has a vector space structure. However, the
mean satisfies a variational property, being the point that minimizes
the quadratic risk\begin{equation}
\ex\{X\}=\arg\min_{y}\ex\left\{ \left\Vert X-y\right\Vert _{2}^{2}\right\} .\label{eq:mean2}\end{equation}
Definitions~\eqref{eq:mean1} and \eqref{eq:mean2} are easily seen
to be equivalent in the case of vector spaces. The second definition
has the benefit that it can be generalized to any metric space.

In particular, it can be generalized to Riemannian manifolds. Recall
that a Riemannian manifold~$(\manifold,\metric)$ is a differentiable
manifold~$\manifold$ equipped with a smooth metric~$\metric$ on
the tangent space. The {}``length''~$\ell(\gamma)$ of a differentiable
curve $\gamma:[0,1]\rightarrow\manifold$ is defined by\[
\ell(\gamma)=\int_{0}^{1}\!\sqrt{\metric(\gamma'(t),\gamma'(t))}\, dt.\]
Given the notion of length, one defines the distance between two points
$m_{1},m_{2}\in\manifold$ as\[
\rdist(x,y)=\inf\{\ell(\gamma)\,|\,\gamma\,\mbox{is a differentiable curve joining }x\mbox{ and }y\}.\]
Consider a Riemannian manifold $(\manifold,\metric$) with corresponding
distance~$\rdist$. Generalizing~\eqref{eq:mean2} for a random
variable~$X$ taking values in~$\manifold$, define the Riemannian
mean (also called: Riemannian barycenter, Riemannian center of mass,
Frechét mean or Karcher mean) as the point that minimizes the average
quadratic distance: \begin{equation}
\exi\{X\}\triangleq\arg\inf_{y\in\manifold}\ex\left\{ \rdist^{2}(X,y)\right\} .\label{eq:definition-intrinsic-mean}\end{equation}
The Riemannian mean is unique for a simply connected manifold of non
positive sectional curvature~\cite{Karcher77} --- as counterexamples,
the reader may consider the distribution consisting of a pair of antipodal
points on the unit circle~$\mathbb{S}^{1}$ (a non-simply connected,
zero curvature manifold) and on the unit sphere~$\mathbb{S}^{2}$
(a simply connected, positive curvature manifold). See~\cite{corcuera99riemannianbarycentres}
for an alternative characterization of the Riemannian mean using the
inverse of the exponential map, and, more in general, see~\cite{smith05intrinsic}
for a short introduction to modern intrinsic estimation on manifolds.

\section{\label{sec:Different-metrics-for}Different metrics for the manifold
of Gaussian distributions}

The random availability of the observations in the LGB filtering setup
induces a stationary distribution on~$\Gauss(n)$, the manifold of
Gaussian distributions on~$\mathbb{R}^{n}$, for the estimate $\left(\hat{\vx},\mP\right)$.
In particular, the estimation error $\ve=\hat{\vx}-\vx$ has a Gaussian
distribution with zero mean, therefore we focus on the submanifold~$\GaussO(n)$
of Gaussian distributions with mean~$\boldsymbol{0}$. The Riemannian
mean depends on the choice of a metric, and this section considers
several such options. In the following, for compactness of notation,
sometimes we confound $\GaussO(n)$ with $\SymP(n)$, for example
by writing the distance between Gaussian distributions $d(\gaussdist(\vzero,\mP_{1}),\gaussdist(\vzero,\mP_{2}))$
directly as $d(\mP_{1},\mP_{2})$.

\subsection{Flat metric for covariances }

The traditional way to represent a Gaussian distribution is by using
its mean and covariance, by identifying~$\Gauss(n)$ with $\mathbb{R}^{n}\times\SymP(n)$.
This is what most consider to be the {}``natural'' representation.
Considering~$\SymP(n)$ as a convex cone of $\mathbb{R}^{n\times n}$
seems also to fit very well with the operation of expected value,
because the expectation is nothing other than a glorified convex linear
combination; the convexity is also useful in optimization, for example
in semidefinite programming~\cite{vanderberghe96semidefinite}. 

From this point of view, $\SymP(n)$ inherits the Euclidean metric
of~$\mathbb{R}^{n\times n}$. This is the metric implicitly used
by Sinopoli \emph{et al.~\cite{sinopoli04kalman}.} The distance
between two Gaussian distributions reduces to the Frobenius distance
between the two covariance matrices:\begin{equation}
d(\gaussdist(0,\mP_{1}),\gaussdist(0,\mP_{2}))=||\mP_{1}-\mP_{2}||_{F}.\label{eq:distfro}\end{equation}
For the Riemannian mean, one obtains that the covariance of the mean
distribution is the expected value of the covariances: $\exi\{\gaussdist(\vzero,\mP)\}=\gaussdist(\vzero,\ex\{\mP\})$.
Note that this mean is affine-invariant, i.e. invariant to a change
of coordinate $\mP\mapsto\mA\mP\mA^{*}$ for $\mA\in\mathrm{GL}(n)$,
but the distance~\eqref{eq:distfro}~is~not.

\subsection{Flat metric for information matrices}

The Information filter~\cite{maybeck79} utilizes a different parametrization
$(\veta,\mY)$ to represent a Gaussian distribution ($\veta=\mP^{-1}\hat{\vx}$
and $\mY=\mP^{-1}$); this gives a \emph{different} embedding of~$\Gauss(n)$
in $\mathbb{R}^{n}\times\SymP(n)$. One could make the argument that
information matrices are a more natural parametrization for Gaussian
distributions: in the canonical representation of Gaussian distributions
as an exponential family~\cite{CasellaBerger}, the information matrix
is the natural parameter; in fact, one writes the probability density
function using $\mP^{-1}$. With this choice, the distance between
distributions is given by\[
d(\gaussdist(0,\mP_{1}),\gaussdist(0,\mP_{2}))=||\mP_{1}^{-1}-\mP_{2}^{-1}||_{F}.\]
The Riemannian mean is $\exi\{\gaussdist(\vzero,\mP)\}=\gaussdist(\vzero,\ex\{\mP^{-1}\}^{-1})$.
It is easy to see that, in the Linear/Gaussian/Bernoulli case, $\exi\{\gaussdist(\vzero,\mP)\}$
exists for all values of $\probArr\in(0,1]$, because $\mP^{-1}$
is bounded by $\mPinf^{-1}$. As in the previous case, the mean is
affine-invariant, but the distance is not.

\subsection{Flat metric for square root of covariances}

The matrix equivalent of the scalar standard deviation is the square
root of the covariance matrix; the eigenvalues of $\sqrt{\mP}$ are
the standard deviations. The distance between distributions can be
defined as\[
d(\gaussdist(0,\mP_{1}),\gaussdist(0,\mP_{1}))=||\sqrt{\mP_{1}}-\sqrt{\mP}_{2}||_{F},\]
and consequently the Riemannian mean is $\exi\{\gaussdist(\vzero,\mP)\}=\gaussdist(\vzero,\ex\{\sqrt{\mP}\}^{2})$.
By Jensen's inequality and the fact that $\mP\mapsto\sqrt{\mP}$ is
operator-concave, it follows that $\ex\{\sqrt{\mP}\}^{2}\leq\ex\{\mP\}$.
Thus the Riemannian mean for this distance exists in all cases when
$\ex\{\mP\}$ exists; moreover, as shown by Proposition~\ref{pro:1},
in some cases, the critical probability for boundedness of $\ex\{\sqrt{\mP}\}$
is strictly less than the critical probability for $\ex\{\mP\}$.

\subsection{Fisher Information Metric}

The problem we are analysing is special in two regards:\textbf{ }1)~we
are concerned with doing statistics on a certain manifold~$\Gauss(n)$;
and 2)~the elements of the manifold~$\Gauss(n)$ represents probability
distribution themselves. The branch of statistics that studies the
properties of the families of probability distributions considered
as a manifold is called \emph{information geometry}%
{} and is a relatively recent development with respect to classical
mathematical statistics~\cite{amari85,murray93differential}%
{}. 

From this point of view, the manifold has a natural metric given by
the generalization of the Fisher Information Matrix (FIM) as a Riemannian\emph{
}metric. We recall the definition of the FIM in the Gaussian case~\cite{CasellaBerger}.
Assume that the available observations $\vz\in\mathbb{R}^{q}$ have
a Gaussian distribution whose mean and covariance are parametrized
by an unknown parameter~$\vtheta\in\mathbb{R}^{n}$: $\vz\sim\gaussdist(\vzmean(\vtheta),\vzcov(\vtheta))$.
The FIM for~$\vtheta$ is the $n\times n$ semidefinite positive
matrix $\fim[\vtheta]$ defined as\begin{equation}
\fim[\vtheta]_{a,b}=\frac{\partial\vzmean}{\partial\theta_{a}}^{*}\vzcov(\theta)^{-1}\frac{\partial\vzmean}{\partial\theta_{b}}+\frac{1}{2}\mbox{Tr}\left\{ \vzcov(\theta)^{-1}\frac{\partial\vzcov(\theta)}{\partial\theta_{a}}\vzcov(\theta)^{-1}\frac{\partial\vzcov(\theta)}{\partial\theta_{b}}\right\} .\label{eq:fim}\end{equation}
The FIM gives the information contained in the samples about the value
of~$\vtheta$; for example, using the FIM one defines the Cramér-Rao
Bound for unbiased estimators as $\mathrm{cov}(\hat{\vtheta})\geq\fim[\vtheta]^{-1}$.
The FIM can be generalized to be a Riemannian metric for the manifold~$\Gauss(n)$.
If we restrict to the submanifold $\GaussO(n)$, given two elements
$\mX,\mY\in\Sym(n)$ in the tangent space at $\gaussdist(0,\mP)$,
the Fisher Information Metric is \begin{equation}
g\left(\mX,\mY\right)=\frac{1}{2}\mbox{Tr}\left\{ \mP^{-1}\mX\mP^{-1}\mY\right\} .\label{eq:metric_0mean}\end{equation}
Compare~\eqref{eq:metric_0mean} with the second term in~\eqref{eq:fim}.
%
{}The distance induced by this metric is%
{}%
{}~(e.g., \cite{lenglet05riemannian}):\begin{equation}
d(\gaussdist(\vzero,\mP_{1}),\gaussdist(\vzero,\mP_{2}))=\left[\sum_{i=1}^{n}\log^{2}\left(\lambda_{i}(\mP_{1}\mP_{2}^{-1})\right)\right]^{1/2}\label{eq:rdist}\end{equation}
This distance is {}``natural'' in the sense that it is linked to
the probability of distinguishing the two distributions $\gaussdist(\vzero,\mP_{1})$
and $\gaussdist(\vzero,\mP_{2})$ by observing their samples, in a
sense which is made precise in~\cite{amari85}. Unfortunately, a
closed form expression for writing $\exi\{\gaussdist(\vzero,\mP)\}$
is not known. 

The use of this natural distance on $\GaussO(n)\simeq\SymP(n)$ allows
to show that some naive results obtained using the flat metric on
covariance are incorrect~\cite{smith05intrinsic}. For example, in
basic mathematical statistics courses, one teaches that, given a set
of samples $\{\vx_{i}\}_{i=1}^{n}$ from a distribution with covariance~$\mP$,
the bias-corrected sample covariance matrix $\hat{\mP}_{sc}=\frac{1}{n-1}\sum_{i=1}^{n}(\vx-\vx_{i})(\vx-\vx_{i})^{*}$
is an \emph{unbiased} and \emph{efficient} estimator of~$\mP$, i.e.
$\ex\{\hat{\mP}_{sc}\}=\mP$ and $\hat{\mP}_{sc}$ reaches the Cramér-Rao
bound. However, it is also well known that $\hat{\mP}_{sc}$ performs
poorly at low sample support. Smith's~\cite{smith05intrinsic} explanation
to this conundrum is that $\hat{\mP}_{sc}$ is not unbiased according
to the natural metric: $\exi\{\hat{\mP}_{sc}\}\neq\mP$. 

Ignoring the Fisher Information Metric interpretation, the distance
defined by~\eqref{eq:rdist} is also natural for $\SymP(n)$ when
it is considered either as a symmetric space, or as the quotient space~$\mathrm{GL}(n)/\mathrm{O}(n)$~\cite{smith05intrinsic}.
The distance has several other useful properties in the context of
LGB filtering. $(\SymP(n),\rdist)$ is a complete metric space~\cite{bougerol93kalman}
and a geodesically complete manifold with nonpositive curvature~\cite{lenglet06statistics}.
The distance~$\rdist$ induces the usual topology~\cite{bougerol93kalman}
on~$\SymP(n)$. The distance is invariant to affine transformations,
and also to inversion $\mP\mapsto\mP^{-1}$; this last property is
useful because one can use either covariance matrices or information
matrices: $\exi\{\mP\}=\exi\{\mP^{-1}\}^{-1}$, which is not true
if one uses the expected value. Using this distance it is also easy
to show contraction properties for the Riccati iterations $g$, $h$
that guarantee the existence of the stationary distribution~\cite{bougerol93kalman}. 

It is possible to show that there is no {}``critical probability''
if one uses this metric. To this end, one should first prove that
the system has a stationary distribution for all values of $\probArr>0$.
This is done in the next section. Then, in Section~\ref{sub:Existence-of-the},
it is proved that the Riemannian mean of this distribution exists.

\subsection{Existence of the stationary distribution}

In this section, we prove the existence of the stationary distribution,
for every value of $\probArr>0$. This can be done by using some results
from Bougerol~\cite{bougerol93kalman} regarding the contraction
properties of the maps~$h$ and~$g$, and some results from Barnsley
\emph{et al.~}\cite{barnsley88new} about the convergence of Iterated
Function Systems%
\footnote{Because this paper is not available electronically yet, the results
are stated here extensively.%
}. Once these results are recalled, the conclusion will be immediate.

\medskip

We need some preliminaries from~\cite{barnsley88new}. Let $(X,d)$
be a complete metric space. Let $f_{i}$, $i=1,\dots,n$ be Lypschitz
functions from~$X$ to~$X$, that is, there exists $s_{i}>0$ such
that $d(f_{i}(x),f_{i}(y))\leq s_{i}d(x,y)$ for all $x,y$ in $X$.
We say that $f_{i}$ is {}``nonexpansive'' if $s_{i}\leq1$, and
we say that it is a {}``strict contraction mapping'' if $s_{i}<1$.
Assign a set of probabilities~$p_{i}$, to these functions, such
that $p_{i}>0$ and $\sum_{i=1}^{n}p_{i}=1$. Consider now the Iterated
Function System $\{(f_{i},p_{i})\}$ and the corresponding Markov
chain, which we denote $\{Z_{i}\},i\geq0$. We say that a measure
$\mu$ is {}``attractive'' if, for every initial distribution of~$Z_{0}$,
the process~$Z_{i}$ converges in distribution to~$\mu$, that is,
$\lim_{i\rightarrow\infty}\mathbb{E}\{f(Z_{i})\}=\int f\, d\mu$ for
every bounded continuous function~$f$ on~$X$. It is an intuitive
result that, if all the~$f_{i}$ are strict contractions, then the
process {}``tends to forget'' the initial conditions, and a stationary
distribution exists. What is not trivial is that IFSs converge in
distribution with much weaker hypotheses, as shown by the following
result.
\begin{thm}
\label{thm:(Barnsley-et-al)}(Barnsley \emph{et al}.~\cite{barnsley88new})
Suppose that the $f_{i}$ satisfies an \emph{average contractivity
condition} as follows: for all $x,y\in X$,\begin{equation}
\sum_{i=1}^{n}p_{i}\log\frac{d(f_{i}(x),f_{i}(y))}{d(x,y)}<0.\label{eq:average}\end{equation}
Then there exists a unique, attractive invariant probability measure
for the IFS.\end{thm}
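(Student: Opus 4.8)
The plan is to realize the invariant measure as the law of the almost-sure limit of the \emph{backward} compositions of the maps, exploiting the elementary fact that the forward and backward compositions of i.i.d.\ maps have the same distribution. Let $\sigma_1,\sigma_2,\dots$ be i.i.d.\ indices with $\pr(\{\sigma_k=i\})=p_i$, let $Z_k=f_{\sigma_k}\circ\cdots\circ f_{\sigma_1}(Z_0)$ be the forward Markov chain, and set $L_k\triangleq f_{\sigma_1}\circ\cdots\circ f_{\sigma_k}$. Since $(\sigma_1,\dots,\sigma_k)$ and $(\sigma_k,\dots,\sigma_1)$ have the same law, $Z_k$ and $L_k(Z_0)$ are equal in distribution, so it suffices to analyse the backward iterates. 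The central step is to show that, for a fixed basepoint $x_0$, the sequence $L_k(x_0)$ is almost surely Cauchy.

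To this end I would bound the increments using the Lipschitz constant of the outer composition:
\[
d\big(L_{k+1}(x_0),L_k(x_0)\big)=d\big(L_k(f_{\sigma_{k+1}}(x_0)),L_k(x_0)\big)\le\Big(\prod_{j=1}^{k}s_{\sigma_j}\Big)b,
\]
where $b\triangleq\max_i d(f_i(x_0),x_0)$. Taking logarithms, the strong law of large numbers applied to the i.i.d.\ sequence $\log s_{\sigma_j}$ --- whose common mean is $\sum_i p_i\log s_i<0$ by the average contractivity condition~\eqref{eq:average} --- gives $\tfrac1k\sum_{j=1}^k\log s_{\sigma_j}\to\Lambda<0$ almost surely, so the contraction factor $\prod_{j=1}^k s_{\sigma_j}$ decays geometrically and $\sum_k d(L_{k+1}(x_0),L_k(x_0))<\infty$ almost surely. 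Completeness of $(X,d)$ then furnishes an almost-sure limit $L_\infty\triangleq\lim_k L_k(x_0)$, and the same estimate shows the limit is independent of $x_0$; let $\mu$ denote its law.

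The remaining steps are standard. Writing $L_k(x_0)=f_{\sigma_1}\big(\tilde L_{k-1}(x_0)\big)$ with $\tilde L_{k-1}\triangleq f_{\sigma_2}\circ\cdots\circ f_{\sigma_k}$, continuity of $f_{\sigma_1}$ lets me pass to the limit; since $\tilde L_{k-1}(x_0)$ has the same limit law $\mu$ and is independent of $\sigma_1$, conditioning on $\sigma_1$ yields $\mu=\sum_i p_i (f_i)_*\mu$, i.e.\ $\mu$ is a fixed point of the Markov operator $T\nu\triangleq\sum_i p_i(f_i)_*\nu$. Attractiveness is then immediate: for any starting point, $Z_k$ has the same law as $L_k(x_0)$, which converges in distribution to $\mu$. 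For uniqueness I would couple two forward trajectories $Z_k,Z_k'$ driven by the \emph{same} indices $\sigma_j$ but issued from two candidate invariant measures; the forward Lipschitz bound gives $d(Z_k,Z_k')\le(\prod_{j=1}^k s_{\sigma_j})\,d(Z_0,Z_0')\to0$ almost surely, and testing against bounded Lipschitz functions forces the two invariant laws to agree.

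The main obstacle is the first step, and specifically the gap between the \emph{pointwise} hypothesis~\eqref{eq:average} and the \emph{global} Lipschitz constants $s_i$ used in the clean argument above: the per-step ratio $d(f_i(x),f_i(y))/d(x,y)$ depends in general on the pair of points, so the quantities whose average I wish to control are not i.i.d.\ along an orbit. Making the estimate rigorous requires either passing to the global constants $s_i$ guaranteed by the Lipschitz assumption (so that~\eqref{eq:average} is read as $\sum_i p_i\log s_i<0$), or tracking the genuinely point-dependent ratios and controlling them via a subadditive/ergodic argument. A secondary technical point is the implicit moment condition --- finiteness of $b$, or more generally $\sum_i p_i\log^+ d(f_i(x_0),x_0)<\infty$ --- needed to keep the displacement term from overwhelming the geometric contraction.
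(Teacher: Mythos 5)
Note first that the paper itself does not prove this theorem: it is quoted from Barnsley \emph{et al.}~\cite{barnsley88new} (with a footnote explaining the statement is reproduced because the source is hard to obtain), so the comparison is against that literature proof. Your route --- backward compositions $L_k=f_{\sigma_1}\circ\cdots\circ f_{\sigma_k}$, the index-reversal identity giving $Z_k \overset{d}{=} L_k(Z_0)$, almost-sure convergence of $L_k(x_0)$, then the fixed-point identity $\mu=\sum_i p_i(f_i)_*\mu$ and a forward coupling for uniqueness --- is the standard Letac-type argument (the one popularized by Diaconis and Freedman), and everything downstream of the Cauchy step is sound. Your secondary worry about finiteness of $b$ is vacuous here: with finitely many Lipschitz maps, $b=\max_i d(f_i(x_0),x_0)$ is a maximum of finitely many finite numbers.

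The genuine gap is the one you flag yourself, and it is fatal to the proof as written rather than a technicality. Condition~\eqref{eq:average} averages the \emph{point-dependent} ratios and does not imply $\sum_i p_i\log s_i<0$ for the global constants $s_i=\sup_{x\neq y}d(f_i(x),f_i(y))/d(x,y)$: the supremum of an average can be much smaller than the average of suprema, and one can have every $s_i>1$ (so $\prod_{j}s_{\sigma_j}\to\infty$ and your SLLN bound gives nothing) while \eqref{eq:average} holds at every pair because different maps contract in different regions. Your argument therefore proves the theorem only under a strictly stronger hypothesis; it happens to cover the paper's application in Proposition~\ref{pro:stationary-exists} (there $g^n$ is a strict contraction and every other length-$n$ word is nonexpansive, so $\sum p_i\log s_i<0$ does hold), but not the theorem as stated. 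The missing idea --- what the actual proof supplies, via Elton's ergodic theorem, and what the cleaner modern treatments do with an $\epsilon$-moment estimate --- is to work directly with the point-dependent ratios: reading \eqref{eq:average} uniformly, i.e.\ $\sup_{x\neq y}\sum_i p_i\log\frac{d(f_i(x),f_i(y))}{d(x,y)}<0$ (the uniform form is how the condition must be understood for the theorem to hold on a noncompact $X$), and using that the log-ratios are bounded above by $\max_i\log s_i$, one gets for small $\epsilon>0$ a constant $q<1$ with $\sup_{x\neq y}\sum_i p_i\bigl(d(f_i(x),f_i(y))/d(x,y)\bigr)^{\epsilon}\leq q$. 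Conditioning step by step along the forward orbit yields $\ex\{d(Z_k(x),Z_k(y))^{\epsilon}\}\leq q^{k}d(x,y)^{\epsilon}$; by index reversal the same bound holds for $d(L_k(u),L_k(v))$, and Chebyshev plus Borel--Cantelli then delivers exactly the almost-sure Cauchy property of $L_k(x_0)$ that your draft tried to extract from the unavailable i.i.d.\ product bound. Your uniqueness coupling needs the same substitution, since it too invokes $\prod_j s_{\sigma_j}\to 0$.
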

\begin{remrk}
Note that it is not assumed that the single $f_{i}$ are strict contractions
($s_{i}<1$) or even contractions $(s_{i}\leq1)$. This theorem can
also be generalized to the case in which the transition probabilities
depend on the state ($p_{i}=p_{i}(x)$), although some additional
hypotheses are required~\cite{barnsley88place}.
\end{remrk}
\medskip

We now recall the following from~\cite{bougerol93kalman}:
\begin{lemma}
(Bougerol~\cite{bougerol93kalman}) \label{lem:boug-lemma}In the
metric $d$ defined by~\eqref{eq:rdist},
\begin{enumerate}
\item The maps $h$ and $g$ are nonexpansive mappings: $d(h(\mP_{1}),h(\mP_{2}))\leq d(\mP_{1},\mP_{2}),$
and equivalently for $g$.
\item If $\mA$ is non-singular, $(\mA,\mB)$ is controllable, $(\mA,\mC)$
observable, the composition $g^{n}=g\circ\cdots\circ g$ of~$n$
copies of~$g$ is a strict contraction mapping; that is, there exists
$\rho=\rho(\mA,\mB,\mC)<1$ such that $d(g(\mP_{1}),g(\mP_{2}))\leq\rho d(\mP_{1},\mP_{2})$.
\end{enumerate}
\end{lemma}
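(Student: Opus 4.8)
The plan is to exploit the special structure of the metric~\eqref{eq:rdist}, namely its invariance under congruences and under inversion, and to write each of $h$ and $g$ as a composition of a few elementary maps whose behaviour is easy to control. Writing $\|\mX\|_{\mP}^{2}=\tfrac12\mathrm{Tr}(\mP^{-1}\mX\mP^{-1}\mX)$ for the squared Riemannian length~\eqref{eq:metric_0mean} of a tangent vector $\mX\in\Sym(n)$ at $\gaussdist(\vzero,\mP)$, I would use three building blocks: the congruence $\kappa_{\mA}:\mP\mapsto\mA\mP\mA^{*}$, the translation $\tau_{\mQ}:\mP\mapsto\mP+\mQ$ (and likewise $\tau_{\fim}$), and the inversion $\iota:\mP\mapsto\mP^{-1}$. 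From the eigenvalue formula~\eqref{eq:rdist} one checks directly that $\kappa_{\mA}$ (for invertible $\mA$) and $\iota$ are \emph{isometries}: $\mA\mP_{1}\mA^{*}(\mA\mP_{2}\mA^{*})^{-1}$ is similar to $\mP_{1}\mP_{2}^{-1}$, and $\mP_{1}^{-1}(\mP_{2}^{-1})^{-1}$ has eigenvalues reciprocal to those of $\mP_{1}\mP_{2}^{-1}$, so in both cases the multiset $\{\log^{2}\lambda_{i}\}$ is unchanged. Since $h=\tau_{\mQ}\circ\kappa_{\mA}$ and $g=\iota\circ\tau_{\fim}\circ\iota\circ\tau_{\mQ}\circ\kappa_{\mA}$, part~1 follows once the translations are shown to be nonexpansive.

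The crux, therefore, is the claim that $\tau_{\mQ}$ is nonexpansive for every $\mQ\succeq0$. I would prove this infinitesimally: a translation has differential equal to the identity on each tangent space, so it suffices to show that its differential does not increase Riemannian length, i.e. $\|\mX\|_{\mP+\mQ}^{2}\le\|\mX\|_{\mP}^{2}$ for all $\mX\in\Sym(n)$, and then integrate along a minimizing geodesic from $\mP_{1}$ to $\mP_{2}$. Putting $\mY=(\mP+\mQ)^{-1}$ and $\mZ=\mP^{-1}$, the inequality $\mP+\mQ\succeq\mP$ gives $0\prec\mY\preceq\mZ$, and the desired bound becomes $\mathrm{Tr}(\mY\mX\mY\mX)\le\mathrm{Tr}(\mZ\mX\mZ\mX)$. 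This is just Löwner monotonicity of $\mY\mapsto\mathrm{Tr}(\mY\mX\mY\mX)$ on $\SymP(n)$, which I would verify by setting $\mR=\mZ-\mY\succeq0$ and $\phi(t)=\mathrm{Tr}((\mY+t\mR)\mX(\mY+t\mR)\mX)$: one finds $\phi'(t)=2\,\mathrm{Tr}(\mR\,\mX(\mY+t\mR)\mX)\ge0$ for $t\in[0,1]$, because $\mX(\mY+t\mR)\mX\succeq0$ and the trace of a product of two positive semidefinite matrices is nonnegative, so $\phi(1)\ge\phi(0)$. This establishes part~1 (for singular $\mA$ one treats the composite $h$ directly, or approximates $\mA$ by invertible matrices and uses continuity of $d$).

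For part~2 I would track \emph{where} the contraction comes from. In the decomposition of $g$ the congruence $\kappa_{\mA}$ (now $\mA$ is nonsingular) and the two inversions are isometries, so all contraction is produced by $\tau_{\mQ}$ and $\tau_{\fim}$; moreover the infinitesimal inequality above is \emph{strict} exactly on the tangent directions excited by $\mQ$, respectively $\fim=\mC^{*}\mC$. A single application of $g$ therefore strictly contracts only the directions seen through $\mC$, which may be a proper subspace when $\fim$ is rank-deficient, while the remaining directions are merely rotated by $\mA$. The role of the $n$-fold composition together with observability of $(\mA,\mC)$ is precisely to remedy this: the observability Gramian $\sum_{k=0}^{n-1}(\mA^{*})^{k}\mC^{*}\mC\,\mA^{k}$ is positive definite, so as a tangent vector is carried through the $n$ stages of $g^{n}$ every direction is strictly contracted by some $\tau_{\fim}$ factor, whence the differential of $g^{n}$ has Riemannian operator norm strictly less than $1$ at each point.

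The main obstacle is the last, quantitative step: passing from a pointwise strict contraction to a single constant $\rho<1$ valid for all $\mP_{1},\mP_{2}$. The per-point factor degenerates to $1$ as the base point runs off to the boundary or to infinity, so I would confine the analysis to a compact set. Here controllability of $(\mA,\mB)$ and observability of $(\mA,\mC)$ pay off again: they bound the iterates of $g^{n}$ below (by the fixed point $\mPinf$, using monotonicity of $g$) and above (the $n$-step information update is full rank), so $g^{n}$ maps a compact subset of $\SymP(n)$ into itself and attracts everything into it. On that compact set the contraction factor is continuous and bounded away from $1$, yielding the uniform $\rho=\rho(\mA,\mB,\mC)<1$. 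Carrying out this compactness-and-bookkeeping argument carefully --- in particular obtaining a uniform lower bound on how much observability contracts across the $n$ stages --- is the delicate part; the isometry/nonexpansiveness skeleton above is routine once the single trace inequality is in hand.
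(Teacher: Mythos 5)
Note first that the paper itself offers no proof of this lemma: it is imported verbatim from Bougerol, so your attempt has to be judged as a reconstruction of the literature argument rather than against anything in the text. Your part~1 is correct and complete. The decomposition $h=\tau_{\mQ}\circ\kappa_{\mA}$ and $g=\iota\circ\tau_{\fim}\circ\iota\circ\tau_{\mQ}\circ\kappa_{\mA}$, the observation that congruences and inversion are isometries of~\eqref{eq:rdist}, and the reduction of nonexpansiveness of translations to the trace inequality $\mathrm{Tr}(\m{Y}\mX\m{Y}\mX)\leq\mathrm{Tr}(\m{Z}\mX\m{Z}\mX)$ for $0<\m{Y}\leq\m{Z}$, proved via $\phi'(t)=2\,\mathrm{Tr}\left(\m{R}\,\mX(\m{Y}+t\m{R})\mX\right)\geq0$, is sound and is essentially the standard route.

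Part~2, however, has a genuine gap at its central claim. You assert that, as a tangent vector is carried through the $n$ stages of $g^{n}$, positive definiteness of the observability Gramian $\sum_{k=0}^{n-1}(\mA^{*})^{k}\mC^{*}\mC\mA^{k}$ guarantees that every direction meets a strictly contracting $\tau_{\fim}$ factor. But run the equality analysis of your own trace inequality: the $\tau_{\fim}$ stage inside the $(k+1)$-st copy of $g$ fails to contract the transported tangent exactly when $\mC\m{N}_{k}\mX=0$, where $\m{N}_{k}\mX\m{N}_{k}^{*}$ is the tangent in information coordinates at that stage. The conjugating matrices obey $\m{N}_{0}=(\mA\mP_{0}\mA^{*}+\mQ)^{-1}\mA$ and $\m{N}_{k+1}=(\mA\mP_{k+1}\mA^{*}+\mQ)^{-1}\mA\mP_{k+1}\,\m{N}_{k}$, i.e.\ they are products of \emph{trajectory-dependent} closed-loop (dual) factors --- they reduce to powers of $(\mA^{*})^{-1}$ only when $\mQ=\vzero$ --- and are in no case the powers $\mA^{k}$. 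Hence triviality of $\bigcap_{k}\ker(\mC\m{N}_{k})$ is an observability statement about a time-varying twisted pair, and the Gramian of $(\mA,\mC)$ does not by itself deliver it; closing this is precisely the hard part. Bougerol's own proof sidesteps the pointwise analysis altogether via the symplectic action: a Riccati-type map whose image is a $d$-bounded subset of $\SymP(n)$ --- here $c\mI\leq g^{n}(\mP)\leq\mWpess$, by controllability and observability --- is automatically a \emph{uniform} strict contraction, a Birkhoff-type theorem that yields strictness and the uniform constant $\rho$ in one stroke.

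A secondary, fixable defect: your compactness step produces a factor $\rho<1$ only on (the geodesic hull of) the absorbing compact set, so integrating the differential bound gives a global strict contraction for $g^{2n}$, not for $g^{n}$ on all of $\SymP(n)$ --- for distant $\mP_{1},\mP_{2}$ the connecting geodesic never enters the good region, and your estimate there integrates only to nonexpansiveness. This is harmless for the paper's purposes (Proposition~\ref{pro:stationary-exists} works with any fixed power of the IFS), but as written it does not establish the lemma in the stated form.
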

\medskip

From these results, the following result is easily proved.
\begin{prop}
\label{pro:stationary-exists}If $\mA$ is non-singular, $(\mA,\mB)$
controllable, $(\mA,\mC)$ observable, then the stationary distribution
for~$\mP$ exists for all $\probArr>0$.\end{prop}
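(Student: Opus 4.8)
The plan is to apply Barnsley's Theorem~\ref{thm:(Barnsley-et-al)} to the IFS~\eqref{eq:ifs} on the metric space $(\SymP(n),\rdist)$, where $\rdist$ is the Fisher-information distance~\eqref{eq:rdist}. The three structural ingredients needed are already available: $(\SymP(n),\rdist)$ is a complete metric space (recalled earlier from Bougerol~\cite{bougerol93kalman}); and the contraction properties of the maps $\fg$ and $\fh$ in this metric are supplied by Lemma~\ref{lem:boug-lemma}. So the only real work is to verify the average-contractivity inequality~\eqref{eq:average} for the two-map system $\{(\fg,\probArr),(\fh,1-\probArr)\}$.

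**First I would** note the obstruction: Theorem~\ref{thm:(Barnsley-et-al)} as stated requires the $f_i$ to be Lipschitz and the inequality~\eqref{eq:average} to hold pointwise for all $x,y$. By Lemma~\ref{lem:boug-lemma}, $\fh$ is nonexpansive ($s_\fh\le 1$, so $\log\frac{\rdist(\fh(\mP_1),\fh(\mP_2))}{\rdist(\mP_1,\mP_2)}\le 0$) and $\fg$ is nonexpansive as well, but neither is guaranteed to be a strict contraction on its own; only the $n$-fold composite $\fg^n$ is a strict contraction with factor $\rho<1$. Thus the naive single-step sum $\probArr\log\frac{\rdist(\fg(\mP_1),\fg(\mP_2))}{\rdist(\mP_1,\mP_2)}+(1-\probArr)\log\frac{\rdist(\fh(\mP_1),\fh(\mP_2))}{\rdist(\mP_1,\mP_2)}$ is only guaranteed to be $\le 0$, not strictly negative --- precisely the borderline case the theorem excludes.

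**The key step, and the main obstacle, is** to pass from the single-step IFS to a lumped IFS built from length-$n$ words. I would consider the $2^n$ compositions $f_{w}=f_{w_1}\circ\cdots\circ f_{w_n}$ indexed by words $w\in\{\fg,\fh\}^n$, with probabilities the products of the $p_i$'s; this lumped system has the same invariant measure as the original. For any word containing at least one occurrence of $\fg$, nonexpansiveness of $\fh$ together with the strict contractivity of $\fg$ (applied after composing the nonexpansive factors) yields a contraction factor strictly below $1$; the all-$\fh$ word $w=(\fh,\dots,\fh)$ has probability $(1-\probArr)^n$, which is $<1$ whenever $\probArr>0$. Summing $p_w\log(\text{factor}_w)$ over all words, the words containing a $\fg$ contribute a strictly negative amount weighted by total probability $1-(1-\probArr)^n>0$, while the all-$\fh$ word contributes at most $0$; hence~\eqref{eq:average} holds strictly for the lumped system.

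**Finally I would** invoke Theorem~\ref{thm:(Barnsley-et-al)} for the lumped IFS to conclude existence and uniqueness of an attractive invariant measure, and then observe that the invariant measure of the lumped (length-$n$-word) chain coincides with that of the original chain~\eqref{eq:ifs}, since the lumped chain is just the original sampled every $n$ steps and the original chain inherits its stationary law. The hypotheses $\mA$ nonsingular, $(\mA,\mB)$ controllable, $(\mA,\mC)$ observable are exactly what Lemma~\ref{lem:boug-lemma} needs for the strict contractivity of $\fg^n$, and the restriction to $\probArr>0$ is exactly what makes $1-(1-\probArr)^n<1$; this matches the proposition's statement. I expect the subtlety to lie entirely in handling the borderline nonexpansive/contractive mixture correctly via the word-composition argument, rather than in any explicit computation.
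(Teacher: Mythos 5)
Your proposal takes essentially the same route as the paper: pass to the ``power'' IFS of all $2^{n}$ length-$n$ words of $\fg,\fh$ with product probabilities, use Lemma~\ref{lem:boug-lemma} for strict contractivity of $\fg^{n}$ and nonexpansiveness of the rest, verify the average contractivity condition~\eqref{eq:average}, and invoke Theorem~\ref{thm:(Barnsley-et-al)}; your closing identification of the lumped chain's invariant measure with that of the original chain~\eqref{eq:ifs} is the same (largely implicit) sampling argument the paper uses. There is one local misstatement in your key step: you assert that \emph{every} word containing at least one occurrence of $\fg$ has contraction factor strictly below $1$, and weight the negative contribution by $1-(1-\probArr)^{n}$. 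Lemma~\ref{lem:boug-lemma} does not support this --- a single $\fg$ is only nonexpansive, as you yourself correctly note one paragraph earlier, so a mixed word such as $\fg\circ\fh^{n-1}$ is guaranteed nonexpansive but not strictly contracting. The slip is harmless and easily repaired: the all-$\fg$ word $\fg^{n}$ alone, with probability $\probArr^{n}>0$ and factor $\rho<1$, contributes $\probArr^{n}\log\rho<0$ to the sum in~\eqref{eq:average}, while every other word contributes at most $0$; this is exactly how the paper's proof closes, and your argument goes through once the over-claim is deleted.
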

\begin{proof}
Consider the behavior of the system at intervals of~$n$ steps. This
corresponds to considering the {}``power'' IFS $\ifs^{n}=\{(g^{n},\probArr^{n}),(g\circ h^{n-1},\probArr(1-\probArr)^{n-1}),\dots,(h^{n},(1-\probArr)^{n})\}$,
which is created by all~$2^{n}$ possible combinations of length~$n$
of the functions~$g,$~$h$, with corresponding probabilities. By
Lemma~\ref{lem:boug-lemma}, $g^{n}$ is a strict contraction, and
all the other combinations are nonexpansive mappings. Therefore, assuming
$\probArr>0$, the system satisfies the average contractivity condition~\eqref{eq:average},
and by Theorem~\ref{thm:(Barnsley-et-al)} the stationary distribution
exists.
\end{proof}

\subsection{Existence of the Riemannian mean for Fisher Information Metric\label{sub:Existence-of-the}}

After having ascertained that the stationary distribution exists (Proposition~\ref{pro:stationary-exists}),
we now prove existence of the Riemannian mean.
\begin{prop}
\label{pro:rdist-bound-1}The Riemannian mean of the LGB distribution
for the distance~\eqref{eq:rdist} exists for all~$\probArr\in(0,1]$.\end{prop}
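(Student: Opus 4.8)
The plan is to reduce the statement to a single moment estimate and then prove that estimate by exploiting the contraction properties already recorded in Lemma~\ref{lem:boug-lemma}. First I would use the geometry of $(\SymP(n),d)$: since this space is simply connected, geodesically complete, and of nonpositive curvature~\cite{lenglet06statistics}, the function $\mQ\mapsto d^{2}(\mP,\mQ)$ is geodesically convex, so the Fr\'echet functional $F(\mQ)\triangleq\ex\{d^{2}(\mP,\mQ)\}$ in~\eqref{eq:definition-intrinsic-mean} is convex and, by Karcher~\cite{Karcher77}, admits a unique minimizer as soon as it is finite and coercive. Because $d^{2}(\mP,\mQ)\le 2\,d^{2}(\mP,\mPinf)+2\,d^{2}(\mPinf,\mQ)$, finiteness of $F$ at the single point $\mPinf$ propagates to every $\mQ$, and coercivity is automatic from convexity once $F\not\equiv+\infty$. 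Hence the entire proposition reduces to showing that the stationary second moment $V\triangleq\ex\{d^{2}(\mP,\mPinf)\}$ is finite for every $\probArr\in(0,1]$. (The case $\probArr=1$ is immediate: the iteration is deterministic and $\mP\to\mPinf$, so $V=0$.)

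Next I would record how the Riccati maps displace the distance to the fixed point $\mPinf=\fg(\mPinf)$. Using that $\fg$ and $\fh$ are nonexpansive (Lemma~\ref{lem:boug-lemma}, part~1) and that $\mPinf$ is fixed by $\fg$, one gets the one-step bounds $d(\fg(\mP),\mPinf)\le d(\mP,\mPinf)$ and $d(\fh(\mP),\mPinf)\le d(\mP,\mPinf)+L$, where $L\triangleq d(\fh(\mPinf),\mPinf)$ is a finite constant. Iterating these inequalities along a block of $n$ consecutive maps $\Phi=\psi_{n}\circ\cdots\circ\psi_{1}$, $\psi_{i}\in\{\fg,\fh\}$, and tracking only the distance to $\mPinf$ throughout, yields $d(\Phi(\mPinf),\mPinf)\le nL$ irrespective of the pattern of the block. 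Moreover, by part~2 of Lemma~\ref{lem:boug-lemma}, when the block consists of $n$ observations ($\Phi=\fg^{n}$) it is a strict contraction with factor $\rho<1$, while every other block is merely nonexpansive; thus the Lipschitz constant of $\Phi$ satisfies $\mathrm{Lip}(\Phi)=\rho$ with probability $\probArr^{n}>0$ and $\mathrm{Lip}(\Phi)\le 1$ otherwise.

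I would then run the second-moment recursion at the scale of $n$-step blocks. Starting the chain at $\mP^{[0]}=\mPinf$ and writing $\mP^{[m]}=\Phi_{m}(\mP^{[m-1]})$ with $\Phi_{m}$ i.i.d.\ and independent of $\mP^{[m-1]}$, the two bounds above give $d(\mP^{[m]},\mPinf)\le\mathrm{Lip}(\Phi_{m})\,d(\mP^{[m-1]},\mPinf)+nL$. Squaring, applying Young's inequality with a small $\epsilon>0$, and taking expectations over the independent factors produces
\begin{equation}
V_{m}\le(1+\epsilon)\,c\,V_{m-1}+(1+1/\epsilon)(nL)^{2},\qquad c\triangleq 1-\probArr^{n}(1-\rho^{2})<1,\label{eq:recursion-sketch}
\end{equation}
where $V_{m}\triangleq\ex\{d^{2}(\mP^{[m]},\mPinf)\}$. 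Choosing $\epsilon$ so small that $(1+\epsilon)c<1$, the affine recursion~\eqref{eq:recursion-sketch} with $V_{0}=0$ keeps $V_{m}$ below the fixed finite bound $(1+1/\epsilon)(nL)^{2}/[1-(1+\epsilon)c]$ for all $m$. Since $c<1$ precisely because $\probArr^{n}>0$, this holds for every $\probArr\in(0,1]$.

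Finally I would pass to the stationary regime. Proposition~\ref{pro:stationary-exists} guarantees that $\mP^{[m]}$ converges in distribution to the stationary law as $m\to\infty$; since $\mQ\mapsto d^{2}(\mQ,\mPinf)$ is nonnegative and lower semicontinuous, Fatou's lemma for weak convergence gives $V=\ex\{d^{2}(\mP,\mPinf)\}\le\liminf_{m}V_{m}<\infty$, which is the required estimate. The main obstacle I anticipate is precisely this last interface between the finite-horizon bound and the stationary expectation: because $\fg$ alone is only nonexpansive, a naive one-step self-consistency argument is vacuous, and one must both operate at the $n$-step scale to capture the genuine contraction $\fg^{n}$ and justify the limit $V_{m}\to V$ (uniform integrability, or the lower-semicontinuity route above). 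Everything else is routine once the block bound $d(\Phi(\mPinf),\mPinf)\le nL$ and the averaged contraction factor $c<1$ are in hand.
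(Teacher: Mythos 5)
Your proof is correct, but it takes a genuinely different route from the paper's. The paper argues via the L\"{o}wner order: using that $\fg$ and $\fh$ are order-preserving with $\fg(\mM)\leq\fh(\mM)$, it dominates the power IFS $\ifs^{n}$ by a two-map pessimist system $\ifs_{\pess}^{n}=\{(\mWpess,\probArr^{n}),(h^{n},1-\probArr^{n})\}$ with $\mWpess=\sup_{\mP\geq\mPinf}\fg^{n}(\mP)$ (finite by observability), converts the matrix ordering $\mPinf\leq\mP\leq\mPpess$ into a distance ordering via the monotonicity Lemma~\ref{lem:dist}, and then computes $\ex\{d^{2}(\mPinf,\mPpess)\}$ in closed form through the geometric waiting time $\tau$: since the distance~\eqref{eq:rdist} is logarithmic in the eigenvalues, $d^{2}(\mPinf,\mPpess)$ grows only quadratically in $\tau$ while $\pr(\{\tau=j\})$ decays geometrically, so the series converges for every $\probArr>0$. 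You instead run a purely metric Lyapunov-drift argument: nonexpansiveness of both maps, the strict contraction of the block $\fg^{n}$ occurring with probability $\probArr^{n}$ (Lemma~\ref{lem:boug-lemma}), the uniform displacement bound $d(\Phi(\mPinf),\mPinf)\leq nL$, Young's inequality, and an affine second-moment recursion with averaged factor $c=1-\probArr^{n}(1-\rho^{2})<1$, followed by a portmanteau/Fatou passage to the stationary law. Your route needs no order structure at all and would apply to any IFS of nonexpansive maps on a complete nonpositively curved space with one strictly contracting block of positive probability; your treatment of the finite-horizon-to-stationary interface is also more careful than the paper's, which sidesteps it because the pessimist chain has an explicit stationary representation. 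What the paper's route buys in exchange is an explicit tail estimate that makes transparent \emph{why} no critical probability appears (log-scale distance grows linearly in the blackout length against a geometric tail). Two minor caveats: your claim that ``coercivity is automatic from convexity'' is not literally true (constants are convex and non-coercive) --- you need the strong geodesic convexity of $d^{2}(\cdot,y)$ on a nonpositively curved space, or a reverse-triangle-inequality argument, though this matches the paper's own brevity at the corresponding reduction step; and your reliance on Lemma~\ref{lem:boug-lemma}(2) imports the hypotheses ($\mA$ nonsingular, $(\mA,\mB)$ controllable, $(\mA,\mC)$ observable), which are the same standing assumptions as Proposition~\ref{pro:stationary-exists} and are implicitly used in the paper's proof as well, so nothing is lost.
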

\begin{proof}
The stationary distribution of the IFS $\ifs=\{(\fg,\probArr),(\fh,1-\probArr)\}$
is equivalent to that of the power IFS $\ifs^{n}=\{(g^{n},\probArr^{n}),(g\circ h^{n-1},\probArr(1-\probArr)^{n-1}),\dots,(h^{n},(1-\probArr)^{n})\}$,
obtained by considering compositions of length~$n$ of the functions
$(g,h)$ with corresponding probabilities. We now build the IFS $\ifs_{\pess}^{n}$,
a {}``pessimist'' approximation to $\ifs^{n}$. By recalling that~$\fg$
and~$\fh$ are order-preserving, and $\fg(\mM)\leq\fh(\mM)$ for
all~$\mM$~\cite{sinopoli04kalman}, one can bound all mixed terms
in $\fg$,~$\fh$ in $\ifs^{n}$ by~$h^{n}$. Furthermore, one can
also find an upper bound for~$\fg^{n}$: because the system is observable,
the uncertainty is bounded over all the state space after~$n$ consecutive
observation are received. Therefore, $\mWpess\triangleq\sup_{\mP\geq\mPinf}\fg^{n}(\mP)$
exists and is bounded: $\mPinf\leq\mWpess<\infty$. Thus the pessimist
approximation to~$\ifs^{n}$ is $\ifs_{\pess}^{n}=\{(\mWpess,\probArr^{n}),(h^{n},1-\probArr^{n})\}$.
Seeing the stationary variables $\mP$ (for $\ifs^{n}$) and $\mPpess$
(for $\ifs_{\pess}^{n}$) as functions of the past infinite sequence
of arrivals $\astring=\{\arr(0),\arr(-1),\arr(-2),\dots\}\in\stringset$,
one has that \begin{equation}
\mPinf\leq\mP(\astring)\leq\mPpess(\astring).\label{eq:order}\end{equation}

To prove that $\exi\{\mP\}$ is bounded for all~$\probArr$, it is
sufficient to show that the minimization problem~\eqref{eq:definition-intrinsic-mean}
is feasible for all~$\probArr$. To prove this, it is sufficient
to show that $\ex\{d^{2}(\mX,\mP)\}$ is bounded for some matrix~$\mX$;
it is convenient to choose $\mX=\mPinf$. By~\eqref{eq:order} and
Lemma~\ref{lem:dist} below, we obtain that $\rdist(\mPinf,\mP(\astring))\leq\rdist(\mPinf,\mPpess(\astring))$
and thus $\ex\{d^{2}(\mPinf,\mP)\}\leq\ex\{d^{2}(\mPinf,\mPpess)\}$.
It follows that $\exi\{\mP\}$ is bounded if $\exi\{\mPpess\}$ is. 

We now investigate boundedness of $\ex\{d^{2}(\mPinf,\mPpess)\}$.
Choose an $\mM$ such that $\mM\mPinf\mM^{*}=\mI$ and do a change
of coordinates $\mP\mapsto\mM\mP\mM^{*}$. One finds that $\ex\{d^{2}(\mPinf,\mPpess)\}=\ex\{d^{2}(\mI,\mPpess')\}$,
where $\mPpess'\triangleq\mM\mPpess\mM^{*}\geq\mI$. Note that because
$\lambda_{i}(\mPpess')\geq1$, we can find the bound $d^{2}(\mI,\mPpess')=\sum_{i=1}^{n}\log^{2}\left(\lambda_{i}(\mPpess'\right)\leq n\,\log^{2}\left(\left\Vert \mPpess'\right\Vert \right)$.
An expression for~$\mPpess(k)$ can be written explicitly as in the
proof of Proposition~\ref{pro:1} as a function of $\tau(k)$, the
number of steps passed without receiving an observation (recall that
one time step in the IFS $\ifs^{n}$ corresponds to~$n$ steps of~$\ifs$):\[
\mPpess(k)=\mA^{n\tau(k)}\mWpess\left(\mA^{*}\right)^{n\tau(k)}+\sum_{i=0}^{n\tau(k)-1}\mA^{i}\mQ\left(\mA^{*}\right)^{i}.\]
From this one finds the bound $\left\Vert \mPpess(k)\right\Vert \leq c_{1}\left\Vert \mA\right\Vert ^{2n\tau(k)}$
for some $c_{1}>0$. Thus $n\log^{2}\left(\left\Vert \mPpess'\right\Vert \right)\leq\tau(k)^{2}c_{2}+\tau(k)c_{3}+c_{4}$
for some $c_{2},c_{3},c_{4}>0$. As in the proof of Proposition~\ref{pro:1},
the expectation can be computed with respect to $\tau(k)$, and one
obtains \[
\ex\{d^{2}(\mPinf,\mP)\}\leq\probArr^{n}\sum_{i=0}^{\infty}(1-\probArr^{n})^{i}\left[i^{2}c_{2}+ic_{3}+c_{4}\right].\]
\vspace{-1.5mm}Series of the kind $\sum_{i=0}^{\infty}i^{k}x^{i}$
with $k\geq0$ are convergent if $|x|<1$, hence $\ex\{d^{2}(\mPinf,\mP)\}$
is always bounded if $\probArr\in(0,1]$. Therefore, the Riemannian
mean is always bounded. \end{proof}
\begin{lemma}
For the distance defined in~\eqref{eq:rdist}, \label{lem:dist}$\mP_{1}\leq\mP_{2}\leq\mP_{3}\,\,\Rightarrow\,\,\rdist(\mP_{1},\mP_{2})\leq\rdist(\mP_{1},\mP_{3})$.\end{lemma}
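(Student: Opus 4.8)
The plan is to reduce, via the affine (congruence) invariance of the distance, to the normalized case $\mP_{1}=\mI$, and then to deduce the inequality from the monotonicity of the eigenvalues under the L\"{o}wner order. First I would use that the distance~\eqref{eq:rdist} is invariant under the congruence $\mP\mapsto\mA\mP\mA^{*}$ for any $\mA\in\mathrm{GL}(n)$: indeed $\mP_{1}\mP_{2}^{-1}\mapsto\mA\,\mP_{1}\mP_{2}^{-1}\mA^{-1}$ is a similarity transformation, so the generalized eigenvalues $\lambda_{i}(\mP_{1}\mP_{2}^{-1})$ entering~\eqref{eq:rdist} do not change. Since congruence also preserves the L\"{o}wner order, I may pick $\mM\in\mathrm{GL}(n)$ with $\mM\mP_{1}\mM^{*}=\mI$ and replace the triple $\mP_{1},\mP_{2},\mP_{3}$ by $\mI,\mP_{2}',\mP_{3}'$ with $\mP_{j}'\triangleq\mM\mP_{j}\mM^{*}$, leaving both the hypothesis $\mI\leq\mP_{2}'\leq\mP_{3}'$ and the two distances to be compared unchanged.

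After this normalization the distance to the identity simplifies: using $\lambda_{i}(\mI\cdot(\mP')^{-1})=1/\lambda_{i}(\mP')$ and $\log^{2}(1/t)=\log^{2}(t)$, formula~\eqref{eq:rdist} reads $\rdist^{2}(\mI,\mP')=\sum_{i=1}^{n}\log^{2}\lambda_{i}(\mP')$. The claim thus collapses to the scalar statement $\sum_{i}\log^{2}\lambda_{i}(\mP_{2}')\leq\sum_{i}\log^{2}\lambda_{i}(\mP_{3}')$.

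The core step is eigenvalue monotonicity. Ordering the eigenvalues of each symmetric matrix decreasingly, Weyl's monotonicity theorem (an immediate consequence of the Courant--Fischer min-max characterization applied to the pointwise inequality $\vx^{*}\mP_{2}'\vx\leq\vx^{*}\mP_{3}'\vx$) gives $\lambda_{i}(\mP_{2}')\leq\lambda_{i}(\mP_{3}')$ for every $i$. Here the left hypothesis $\mP_{1}\leq\mP_{2}$ is essential rather than decorative: after normalization it forces $\lambda_{i}(\mP_{2}')\geq1$, so that $1\leq\lambda_{i}(\mP_{2}')\leq\lambda_{i}(\mP_{3}')$, and on $[1,\infty)$ the map $t\mapsto\log^{2}t$ is nondecreasing. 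Hence $\log^{2}\lambda_{i}(\mP_{2}')\leq\log^{2}\lambda_{i}(\mP_{3}')$ term by term, and summing over $i$ finishes the proof.

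I expect the only genuinely delicate point to be precisely this use of the lower bound: without $\mP_{1}\leq\mP_{2}$ the eigenvalues could straddle $1$, where $\log^{2}$ decreases on $(0,1)$ and increases on $(1,\infty)$, and the termwise comparison would break down. The underlying matrix facts --- similarity-invariance of spectra, congruence-invariance of the order, and Weyl monotonicity --- are all standard, so no real difficulty arises elsewhere.
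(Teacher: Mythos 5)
Your proof is correct and takes essentially the same route as the paper's own (sketched) argument: normalize to $\mP_{1}=\mI$ via the congruence $\mP\mapsto\mM\mP\mM^{*}$, then compare $\rdist(\mI,\mP_{2}')$ and $\rdist(\mI,\mP_{3}')$ directly from~\eqref{eq:rdist}. You simply fill in what the paper leaves as ``direct computation''---Weyl's monotonicity theorem plus the fact that $t\mapsto\log^{2}t$ is nondecreasing on $[1,\infty)$, where the hypothesis $\mP_{1}\leq\mP_{2}$ correctly pins the eigenvalues---and your remark that the lemma fails without that lower bound is accurate.
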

\begin{proof}
(sketch) First, reduce to the case $\mP_{1}=\mI$ by letting $\mP'_{i}=\mM\mP_{i}\mM^{*}$,
with $\mM$ chosen such that $\mM\mP_{1}\mM^{*}=\mI$. Then verify
$\rdist(\mI,\mP'_{2})\leq\rdist(\mI,\mP'_{3})$ by direct computation
using~\eqref{eq:rdist}. 
\end{proof}
After one has proved the existence of $\exi\{\mP\}$, uniqueness follows
from the fact that the manifold has nonpositive curvature~\cite{Karcher77}.

%
{}


\vspace{-1mm}

\section{Conclusions}
\begin{quotation}
Algebra is the offer made by the devil to the mathematician. The devil
says: \textquoteleft{}I will give you this powerful machine, it will
answer any question you like. All you need to do is giving me your
soul: \emph{give up geometry} and you will have this marvellous machine.\textquoteright{}~\cite{mathematical_evolutions}

\hfill\textemdash{} Sir Michael Atiyah (1929--)
\end{quotation}
The righteous engineer must refuse the devil's offer. Reframing problems
in a geometric framework usually allows to spot the hidden assumptions,
and therefore to check whether the results have a physical meaning,
or they are just figments of the mathematical formalization.

The hidden assumption in the work of Sinopoli \emph{et al. }is that
positive definite matrices are treated as a convex cone of $\mathbb{R}^{n\times n}$.
This is perhaps the most intuitive interpretation, and has good consequences
in certain contexts, such as semidefinite programming~\cite{vanderberghe96semidefinite}.
However, this could lead to incorrect\emph{ }conclusions when doing
rigorous intrinsic estimation\emph{. }This is well shown by Smith's
example~\cite{smith05intrinsic}, that even though the sample covariance
matrix is unbiased in the naive sense ($\ex\{\hat{\mP}_{sc}\}=\mP$),
it is biased in the intrinsic sense ($\exi\{\hat{\mP}_{sc}\}\neq\mP$),
thereby contradicting what is taught in elementary statistics courses. 

In the case of the Linear/Gaussian/Bernoulli filtering problem, if
one uses the average standard deviations, instead of the average covariances,
one obtains a different critical probability~(Proposition~\ref{pro:1}).
It is pointless to discuss which critical probability is more critical
than the other, but surely considering the average error is more natural
than considering the average error \emph{squared}. The point is that
the boundedness of the expected value cannot be considered as a criterion
for {}``stability''; there are plenty of well-behaved probability
distributions which have infinite moments%
\footnote{ Consider as an example the Pareto distribution, defined as $\pr(\{X>x\})=x^{-k}$
for $x\geq1,k>0$: for this distribution, $\ex\{X\}$ is bounded only
if $k>1$, and $\ex\{X^{2}\}$ only if $k>2$; yet it has a regular
power law. All the statistics that remain finite change smoothly when~$k$
goes through the {}``critical'' values~$1$ and~$2$.%
}. 

Sinopoli's critical probability is critical only in the sense that
it is the threshold under which $\ex\{\mP\}$ ceases to be meaningful
as a performance measure. Under the threshold, the qualitative behavior
of the system does not change, as one can see from considering the
Riemannian mean derived from the intrinsic Fisher Information Metric
(Proposition~\ref{pro:rdist-bound-1}) --- that is possibly a less
intuitive, but more natural way to represent the concept of {}``average
uncertainty''. Thus the intense research effort in trying to characterize~$\probArrCrit$
seems misplaced; of more interest is studying the entire LGB distribution~\cite{censi09fractals,epstein08probabilistic}.

%
{}

%
{}

%
{}

\bibliographystyle{ieeetr}
\bibliography{bib/notflat}

\stop

\cleardoublepage
\begin{prop}
\label{pro:rdist-bound}Let $\fim>0$. With the distance defined by~\eqref{eq:rdist},
the Riemannian mean of the LGB distribution is bounded for all~$\probArr\in(0,1)$.\end{prop}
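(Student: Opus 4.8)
The plan is to follow the argument of Proposition~\ref{pro:rdist-bound-1}, but to exploit the stronger hypothesis $\fim>0$ in order to bypass the power IFS entirely. The key observation is that a single observation already bounds the covariance uniformly over the whole state space: since $(\mA\mP\mA^*+\mQ)^{-1}+\fim\geq\fim$ and matrix inversion reverses the L\"owner order, one has $\fg(\mP)\leq\fim^{-1}$ for every $\mP\in\SymP(n)$. Thus whenever an observation is received the covariance is reset to a value dominated by the fixed bounded matrix $\fim^{-1}$, which will serve directly as the pessimist reset value (the matrix analogue of $\sWpess=\sfim^{-1}$ in Proposition~\ref{pro:1}).

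First I would introduce the single-step pessimist approximation $\ifspess=\{(\mP\mapsto\fim^{-1},\,\probArr),(\fh,\,1-\probArr)\}$. Since $\fg$ and $\fh$ are order-preserving and $\fg(\mP)\leq\fim^{-1}$, comparing the true stationary variable $\mP(\astring)$ with the pessimist stationary variable $\mPpess(\astring)$, both viewed as functions of the past arrival sequence $\astring\in\stringset$, yields the sandwich $\mPinf\leq\mP(\astring)\leq\mPpess(\astring)$. Existence of the true stationary distribution follows from Proposition~\ref{pro:stationary-exists}, the hypothesis $\fim>0$ supplying the one-step observability; the pessimist stationary variable is even more transparent, being the explicit push-forward of the geometric law of the waiting time.

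Next, to establish that $\exi\{\mP\}$ exists it suffices to show that the objective in~\eqref{eq:definition-intrinsic-mean} is finite at a single point, and I would take $\mX=\mPinf$. By Lemma~\ref{lem:dist} the sandwich gives $d(\mPinf,\mP(\astring))\leq d(\mPinf,\mPpess(\astring))$, so it is enough to bound $\ex\{d^2(\mPinf,\mPpess)\}$. Writing $\mPpess(k)$ explicitly through $\tau(k)$, the number of steps since the last observation,
\[
\mPpess(k)=\mA^{\tau(k)}\fim^{-1}\left(\mA^*\right)^{\tau(k)}+\sum_{i=0}^{\tau(k)-1}\mA^i\mQ\left(\mA^*\right)^i,
\]
I would perform the change of coordinates with $\mM\mPinf\mM^*=\mI$, use the bound $d^2(\mI,\mPpess')\leq n\log^2(\left\Vert\mPpess'\right\Vert)$ (valid because each $\lambda_i(\mPpess')\geq1$), and combine it with $\left\Vert\mPpess(k)\right\Vert\leq c_1\left\Vert\mA\right\Vert^{2\tau(k)}$ to dominate $d^2(\mI,\mPpess')$ by a quadratic polynomial in $\tau(k)$. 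Since $\pr(\{\tau(k)=j\})=(1-\probArr)^j\probArr$, taking the expectation reduces everything to series of the form $\sum_{j\geq0}(1-\probArr)^j j^m$, which converge for all $\probArr\in(0,1)$.

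The main obstacle is conceptually mild: the entire argument hinges on the one-step reset bound $\fg(\mP)\leq\fim^{-1}$, which is exactly what $\fim>0$ buys and what eliminates the $n$-fold composition needed in Proposition~\ref{pro:rdist-bound-1}. The only genuinely technical points are the logarithmic bound on $d^2(\mI,\mPpess')$ and the convergence of the resulting polynomial-times-geometric series, both of which are routine and parallel the computation already carried out there. Uniqueness of the Riemannian mean then follows from the nonpositive curvature of $(\SymP(n),\rdist)$.
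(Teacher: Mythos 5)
Your proposal is correct and follows essentially the same route as the paper's own proof: the one-step reset bound $\fg(\mP)\leq\fim^{-1}$ made available by $\fim>0$ (eliminating the power IFS of Proposition~\ref{pro:rdist-bound-1}), the explicit pessimist iterate written as a function of $\tau(k)$, the domination of $\log^{2}\lambda_{\max}(\mPpess)$ by a quadratic polynomial in $\tau(k)$, and convergence of the polynomial-times-geometric series for every $\probArr\in(0,1)$. The only cosmetic deviation --- and if anything a slightly more careful one --- is that you anchor the variational problem at $\mPinf$ and normalize via $\mM\mPinf\mM^{*}=\mI$ so that $\lambda_{i}(\mPpess')\geq1$ cleanly justifies $d^{2}(\mI,\mPpess')\leq n\log^{2}\left(\left\Vert \mPpess'\right\Vert \right)$, whereas the paper takes $\mX=\mI$ directly and invokes the analogous bound without this normalization.
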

\begin{proof}
To prove that $\exi\{\mP\}$ is bounded for all $\probArr$, it is
sufficient to show that the minimization problem $\arg\inf_{\mM}\ex\{d^{2}(\mM,\mP)\}$
is feasible for all $\probArr$, with the distance defined by~\eqref{eq:rdist}.
To prove this, it is sufficient to show that $\ex\{d^{2}(\mM,\mP)\}$
is bounded for some matrix $\mM$; it is convenient to choose $\mM=\mI$.
Then existence and uniqueness of the Riemannian means follows from
the fact that the manifold has nonpositive curvature. We start by
the obvious bound $\ex\{d^{2}(\mI,\mP)\}\leq n\ex\left\{ \log^{2}(\lambda_{\max}(\mP))\right\} $.
An expression for $\lambda_{\max}(\mP)$ can be found by considering
the same {}``pessimist'' system in the proof of Proposition~\ref{pro:1},
this time written for the full matrix system:\[
\mPpess(k)=\begin{cases}
\mWpess & \tau(k)=0\\
\mA^{\tau(k)}\mWpess\left(\mA^{*}\right)^{\tau(k)}+\sum_{i=0}^{\tau(k)-1}\mA^{i}\mQ\left(\mA^{*}\right)^{i} & \tau(k)\geq1\end{cases}\]
with $\mWpess\triangleq\fim^{-1}$. Straightforward computations lead
to:\begin{eqnarray*}
\lambda_{\max}\mPpess(k) & \leq & \lambda_{\max}(\mWpess)\left(\sigma_{\max}\mA\right)^{2\tau(k)}+\lambda_{\max}(\mQ)\sum_{i=0}^{\tau(k)-1}\left(\sigma_{\max}\mA\right)^{2i}\\
 & \leq & \left(\sigma_{\max}\mA\right)^{2\tau(k)}\left[\lambda_{\max}(\mWpess)+\frac{\lambda_{\max}(\mQ)}{(\sigma_{\max}\mA)^{2}-1}\right]\end{eqnarray*}
This implies that $\log\lambda_{\max}(\mPpess(k))\leq\tau(k)c_{1}+c_{2}$
for some $c_{1},c_{2}>0$. It follows that, for some $a,b,c>0$, $\log^{2}(\lambda_{\max}(\mPpess))=\tau(k)^{2}a+\tau(k)b+c$.
The expectation can be computed as\[
\ex\left\{ \log^{2}(\lambda_{\max}(\mP))\right\} \leq\probArr\sum_{i=0}^{\infty}(1-\probArr)^{i}\left[i^{2}a+ib+c\right]\]
Series of the kind $\sum_{i=0}^{\infty}i^{k}x^{i}$ with $k\geq0$
are convergent if $|x|<1$, hence $\ex\{d^{2}(\mI,\mP)\}$ is always
bounded. Therefore, the Riemannian mean is always bounded.  
\end{proof}

\end{document}